\documentclass[journal]{IEEEtran}

\usepackage{amsmath}
\usepackage{amsfonts}
\usepackage{amssymb}
\usepackage{amsthm}
\usepackage{tabularx}
\usepackage{mathrsfs}
\usepackage{graphicx}
\usepackage{subcaption}
\captionsetup[subfigure]{labelformat=simple, labelsep=space}

\usepackage{caption}
\captionsetup{font=footnotesize, labelfont=bf}
\usepackage{url}
\usepackage{hyperref}

\newtheorem{remark}{Remark}

\newtheorem{Proposition}{Proposition}

\usepackage{stfloats}
\usepackage{float}
\usepackage{graphicx}
\hyphenation{op-tical net-works semi-conduc-tor}
\usepackage{xcolor}

\makeatletter
\def\blfootnote{\xdef\@thefnmark{}\@footnotetext}
\makeatother

\begin{document}

\title{\LARGE UAV-Relay Assisted RSMA Fluid Antenna System:\\ Outage Probability Analysis } 
\author{Farshad Rostami Ghadi, \IEEEmembership{Member}, \textit{IEEE}, 
            Masoud Kaveh, 
            Francisco Hernando-Gallego,  
            Diego Mart\'in,\\ 
	    Kai-Kit Wong, \IEEEmembership{Fellow}, \textit{IEEE}, and 
	    Chan-Byoung Chae, \IEEEmembership{Fellow}, \textit{IEEE}
\vspace{-9mm}
}
\maketitle

\blfootnote{The work of F. Rostami Ghadi is supported by the European Union's Horizon 2022 Research and Innovation Programme under Marie Sk\l odowska-Curie Grant No. 101107993. The work of K. K. Wong is supported by the Engineering and Physical Sciences Research Council (EPSRC) under Grant EP/W026813/1. The work of M. Kaveh is supported in part by the Academy of Finland under Grants 345072 and 350464. The work of C. B. Chae is supported by the Institute for Information and Communication Technology Planning and Evaluation (IITP)/NRF grant funded by the Ministry of Science and ICT (MSIT), South Korea, under Grant RS-2024-00428780 and 2022R1A5A1027646.}

\begin{abstract}
This letter studies the impact of fluid antenna system (FAS) technology on the performance of unmanned aerial vehicle (UAV)-assisted multiuser communication networks. Specifically, we consider a scenario where a fixed-position antenna (FPA) base station (BS) serves $K$ FAS-equipped users with the assistance of a UAV acting as an aerial relay. The BS employs rate-splitting multiple access (RSMA), while the UAV operates in half-duplex (HD) mode using the decode-and-forward (DF) strategy. For this system, we derive a compact analytical expression for the outage probability (OP) and its asymptotic behavior in the high signal-to-noise ratio (SNR) regime, leveraging the multivariate $t$-distribution. Our results show how deploying FAS at ground users (GUs) in UAV-aided communications improves overall system performance compared to using FPA GUs.
\end{abstract}

\begin{IEEEkeywords}
Fluid antenna system, unmanned aerial vehicle, relay, rate-splitting multiple access, outage probability.
\end{IEEEkeywords}

\maketitle
	
\blfootnote{\noindent F. Rostami Ghadi and is with the Department of Signal Theory, Networking and Communications, Research Centre for Information and Communication Technologies (CITIC-UGR), University of Granada, 18071, Granada, Spain (e-mail: $\rm f.rostami@ugr.es)$.}
\blfootnote{\noindent M. Kaveh is with the Department of Information and Communication Engineering, Aalto University, Espoo, Finland (e-mail: $\rm masoud.kaveh@aalto.fi$).}
\blfootnote{\noindent F. Hernando-Gallego and D. Mart\'in are with the Department of Computer Science, Escuela de Ingenier\'{i}a Inform\'{a}tica de Segovia, Universidad de Valladolid, Segovia, 40005, Spain (e-mail: $\rm \{diego.martin.andres, fhernando\}@uva.es$).}
\blfootnote{\noindent K. K. Wong is affiliated with the Department of Electronic and Electrical Engineering, University College London, Torrington Place, WC1E 7JE, United Kingdom and also affiliated with Yonsei Frontier Lab, Yonsei University, Seoul, Republic of Korea (e-mail: $\rm kai\text{-}kit.wong@ucl.ac.uk$).}
\blfootnote{\noindent C. B. Chae is with School of Integrated Technology, Yonsei University, Seoul, 03722, Republic of Korea. (e-mail: $\rm cbchae@yonsei.ac.kr)$.}

\blfootnote{Corresponding author: Francisco Hernando-Gallego.}

\vspace{-3mm}
\section{Introduction}\label{sec-intro}
\IEEEPARstart{F}{luid} antenna system (FAS) has emerged as a transformative technology that dynamically adjusts its antenna location to mitigate small-scale fading and enhance wireless communication reliability \cite{Wong2020FAS}. Using spatial flexibility, FAS improves signal quality and interference management, making it a promising candidate for next-generation networks (NGNs), a.k.a., sixth-generation (6G) systems \cite{wong2022bruce,new2024tutorial}. The integration of FAS with advanced technologies such as reconfigurable intelligent surfaces (RIS) \cite{Ghadi2024RISFAS,shoja2022mimo,Zhu2025fluid,rostami2025phase}, integrated sensing and communication (ISAC) \cite{zhou2024fluid,rostami2024isac,zou2024shift}, non-orthogonal multiple access (NOMA) \cite{new2024noma}, and rate-splitting multiple access (RSMA) \cite{Rostami2024rsma} has been extensively investigated, demonstrating its potential to enhance spectral and energy efficiency.

FAS could be synergized with other advanced communication paradigms, such as unmanned aerial vehicle (UAV)-assisted communication. UAVs have recently attracted significant attention in wireless networks due to their ability to serve as mobile aerial relays, providing enhanced coverage, adaptability, and reliability, especially in challenging environments \cite{Li2021UAVRelaying}. In this regard, considering the UAV as a relay could be particularly valuable when direct transmission between base stations (BSs) and ground users (GUs) is obstructed or suffers from severe path loss. As 6G aims to meet the ultra-reliable low-latency communication (URLLC) and seamless connectivity requirements, UAV relays could play a key role in tackling these challenges. However, the performance of UAV relaying is often limited by unpredictable channel variations and dynamic mobility constraints. In this context, FAS could help by enabling antennas on the user side to dynamically adjust in real-time, optimizing signal reception and improving network reliability. Only recently, \cite{shen2025ris,abdou2024sum,Meng-2025} applied FAS to UAV and UAV-relay systems, respectively, formulating a sum-rate maximization problem. Their results demonstrated that FAS improves sum-rate performance compared to deploying fixed-position antennas (FPAs). Therefore, the combination of FAS and UAVs presents an intriguing possibility, where UAVs offer mobility and flexible deployment, while FAS could dynamically adjust antenna positions to optimize signal reception based on time-varying channel conditions. This synergy could enable UAVs to better combat fading effects, improve signal reception, and enhance overall transmission reliability, making it a promising avenue for 6G networks.

Motivated by the above, this work explores the integration of FAS and UAV-assisted relaying into a unified multiuser communication framework. Specifically, we consider a half-duplex (HD) UAV relay using the decode-and-forward (DF) strategy to assist a FPA BS in serving multiple FAS-equipped GUs. Instead of considering NOMA which suffers from inter-user interference due to successive interference cancellation (SIC) complexities, the BS employs RSMA signaling which mitigates interference by splitting user messages into common and private parts \cite{Mao2022RSMA}. For this model, we derive a compact analytical expression for the outage probability (OP) and its asymptotic expression at high signal-to-noise ratio (SNR) using the multivariate $t$-distribution. Our results demonstrate the great performance gains by deploying FAS at GUs in the UAV-relayed RSMA system, outperforming conventional FPA systems and multiple access schemes in terms of reliability, interference management, and adaptability.

\vspace{-3mm}
\section{System Model}\label{sec-sys}
\subsection{Network Model}
We consider a UAV-assisted communication system, as shown in Fig.~\ref{fig_model}, where a BS serves $K$ GUs via an aerial relay, i.e., UAV. For simplicity, the indices $\{b, a, k\}$ represent the BS, UAV, and $k$-th GUs, respectively. It is assumed that direct communication links between the BS and the GUs are blocked due to obstacles and significant link attenuation. Hence, the aerial relay operates in HD mode, employing a DF strategy to facilitate communication between the BS and the GUs. The UAV is assumed to hover at a fixed location, with coordinates $\mathbf{r}_{a} = (x_a, y_a, z_a)$,\footnote{The OP performance can be determined for any specific UAV location, allowing for straightforward extension to scenarios involving a moving UAV, though such cases are omitted here for brevity.} while the BS and GUs are located at the origin of the Cartesian coordinate system, $\mathbf{r}_b = (0, 0, 0)$, and at $\mathbf{r}_{k} = (x_{k}, y_{k}, z_{k})$ for $k = 1, \dots, K$, respectively. The BS and UAV are equipped with single FPAs, while each GU features a grid structure consisting of $N_k^l$ ports, uniformly distributed over two linear spaces of length $W_k^l\lambda$ for $l\in\left\{1,2\right\}$, with $\lambda$ being the wavelength. Thus, the total number of ports for the $k$-th GU is $N_{k}=N_{k}^1\times N_{k}^2$ and the total area of the  surface is $W_{k}=\left(W_k^1\times W_k^2\right)\lambda^2$. Additionally, to simplify the indexing of these 
two-dimensional (2D) structures, we introduce a mapping function $\mathcal{F}\left(n_{k}\right)=\left(n_k^1,n_k^2\right)$, $n_{k}=\mathcal{F}^{-1}\left(n_k^1,n_k^2\right)$, which converts the 2D indices into a one-dimensional (1D) form such that $n_{k}\in\left\{1,\dots,N_{k}\right\}$ and $n_k^l\in\left\{1,\dots,N_k^l\right\}$.

\begin{figure}[!t]
\centering \includegraphics[width=0.7\columnwidth]{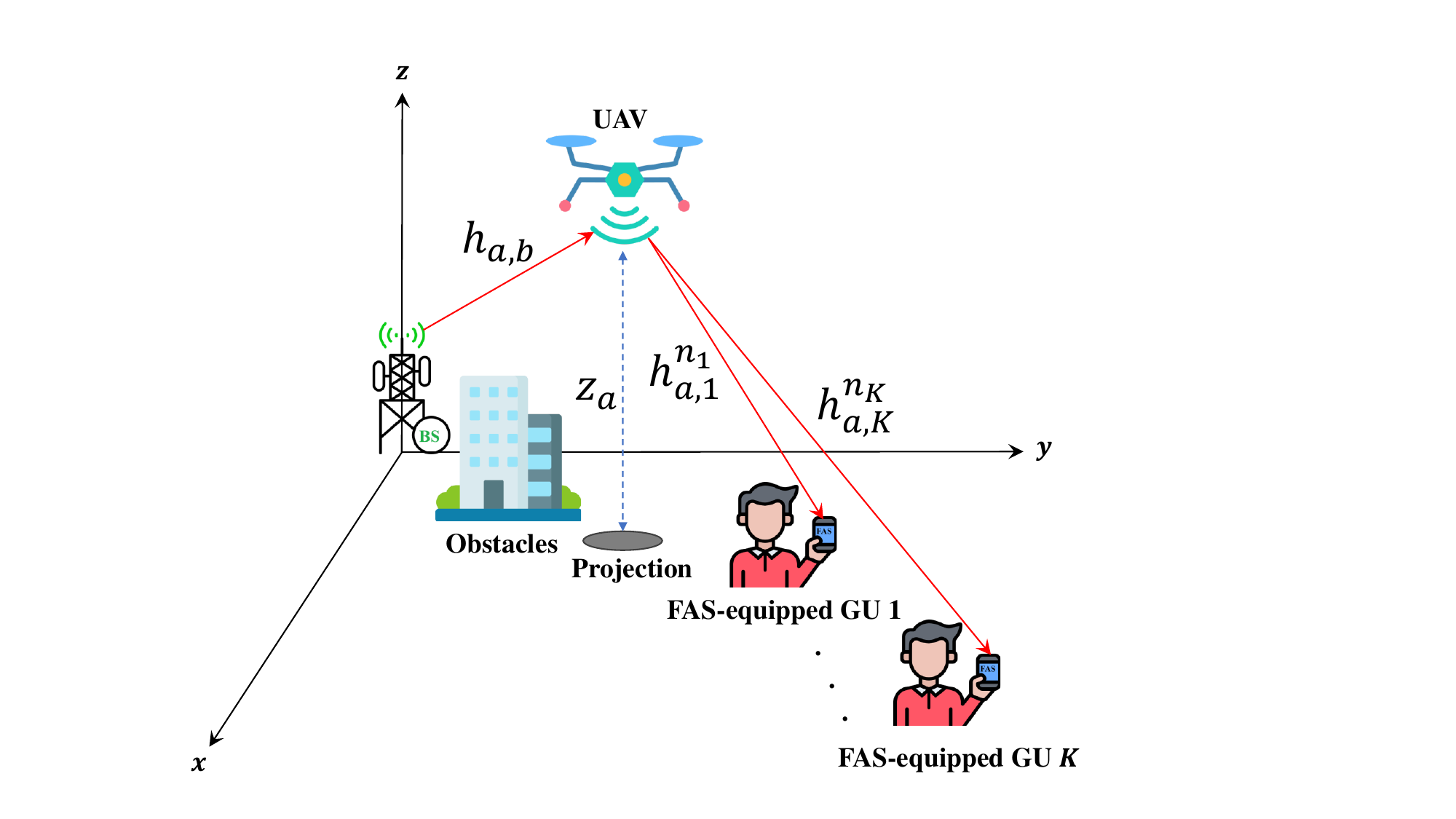}
\caption{Sketch of the UAV-relay assisted FAS.}\vspace{0cm}\label{fig_model}
\vspace{-3mm}
\end{figure}

\vspace{-2mm}
\subsection{Channel Model}
The air-ground channels, i.e., the channels between the BS and UAV and the channels between the UAV and each GU, are modeled with  probabilistic line-of-sight (LoS) components and non-line-of-sight (NLoS) components. According to \cite{hourani2014optimal}, the probability of LoS for an air-ground link with a UAV and a ground node $i\in\left\{b,k\right\}$ is defined as
\begin{align}
P_{i}^\mathrm{LoS}=\frac{1}{1+\mu_1\mathrm{e}^{-\mu_2\left(\theta_i-\mu_1\right)}},
\end{align}
where $\mu_1$ and $\mu_2$ are the constant parameters related to the  propagation environment \cite{hourani2014optimal}. Besides, $\theta_i$ is the corresponding elevation angle, denoted as
\begin{align}
\theta_i = \mathrm{tan}^{-1}\left(\frac{z_a}{d_{a,i}}\right),
\end{align}
where $d_{a,i}=\|\mathrm{r}_a-\mathrm{r}_i\|$ defines the distance between the ground node $i$ and the UAV's projection on the ground (i.e., the horizontal distance), which can be computed as
\begin{align}
d_{ai} = \sqrt{\left(x_a-x_i\right)^2+\left(y_a-y_i\right)^2}.
\end{align}
As a result, the probability of NLoS can be defined as
\begin{align}
P_i^\mathrm{NLoS} = 1-P_i^\mathrm{LoS}.
\end{align}
Therefore, the corresponding path-loss coefficient of the air-ground channel can be expressed as
\begin{align}\label{eq-path}
L_{a,i}= \left(\eta_1P_i^\mathrm{LoS}+\eta_2P_i^\mathrm{NLoS}\right)r_{a,i}^{-\beta}, 
\end{align}
in which $\eta_1$ and $\eta_2$ are the reference path-loss coefficients at the distance of 1 meter for the LoS and NLoS links, respectively. The term $\beta$ denotes the path-loss exponent and $r_{a,i}$ is the distance of the air-ground link between the UAV and node $i$, given by
\begin{align}
r_{a,i}=\sqrt{{d_{a,i}^2+\left(z_a-z_i\right)^2}}.
\end{align}

Based on the channel measurement presented in \cite{yanmaz2013achieving}, the small-scale fading channel of the air-ground links between the UAV and node $i$, denoted as $h_{a,i}$, can be characterized using Nakagami-$m$ distribution. Hence, the corresponding air-ground channel gain $g_{a,i}=|h_{a,i}|^2$ is modeled as Gamma distribution with the cumulative distribution function (CDF)
\begin{align}\label{eq-cdf1}
F_{g_{a,i}}\left(g\right) = \frac{\Upsilon\left(m_i,\frac{m_i}{\Omega_i}g\right)}{\Gamma\left(m_i\right)}, 
\end{align}
where $\Gamma\left(a\right)$ and  $\Upsilon\left(a,b\right)$ are Gamma function and the lower incomplete Gamma function, respectively. Besides, $\Omega_i$ and $m_i$ are the spread and shape parameters, respectively. However, considering the simplest FAS, only the port that provides the highest received signal-to-interference-plus-noise ratio (SINR) for FAS-equipped GUs is activated. Consequently, the received channel gain at the $k$-th GU is expressed as
\begin{align}\label{eq-g-fas}
g_{\mathrm{fas},k} = \max\left\{g_{a,k}^{1},\dots,g_{a,K}^{N_K}\right\}=g_{a,k}^{n_k^*},
\end{align}
where $n_{k}^*$ denotes the best port index at the $k$-th GU that maximizes the channel gain, i.e.,
\begin{align}
n_{k}^* = \arg\underset{n_k}{\max}\left\{g_{a,k}^{n_k}\right\}.
\end{align}
Moreover, due to the ability of fluid antenna ports to switch positions freely and be placed arbitrarily close to one another, the resulting channels exhibit spatial correlation. In particular, for a given setup with rich scattering, the covariance between any two arbitrary ports, denoted as $n_k$ and $\tilde{n}_k$, at the $k$-th GU is mathematically expressed as \cite{Ghadi2024RISFAS}
\begin{align}
\varrho_{n_k,\tilde{n}_k}^{k}=\mathcal{J}_0\left(2\pi\sqrt{\left(\frac{n_k^1-\tilde{n}_k^1}{N^1_k-1}W_k^1\right)^2+\left(\frac{n_k^2-\tilde{n}_k^2}{N^2_k-1}W_k^2\right)^2}\right),
\end{align}
where $\mathcal{J}_0\left(\cdot\right)$ defines the zero-order spherical Bessel function of the first kind. Therefore, following the results in \cite{Rostami2024rsma}, the CDF of the channel gain at the $k$-th GU (see \eqref{eq-g-fas}), which  includes the maximum of $N_k$ correlated Gamma random variables, can be expressed as
\begin{multline}
	F_{g_{\mathrm{fas},k}}\left(g\right) = 
	T_{\nu_{k},\mathbf{\Sigma}_{k}} \Bigg( t_{\nu_{k}}^{-1} 
	\left( \frac{\Upsilon\left(m_{k},\frac{m_k}{\Omega_k}g\right)}
	{\Gamma\left(m_k\right)} \right), \dots, \\  
	t_{\nu_k}^{-1} \left( \frac{\Upsilon\left(m_k,\frac{m_k}{\Omega_k}g\right)}
	{\Gamma\left(m_k\right)} \right) ; \nu_k, \Theta_k \Bigg), \label{eq-cdf2}
\end{multline}
where $t_{\nu_k}^{-1}(\cdot)$ represents the inverse CDF (quantile function) of the univariate $t$-distribution with $\nu_k$ degrees of freedom for the $k$-th GU. The CDF of the multivariate $t$-distribution, which has a correlation matrix $\mathbf{\Sigma}_k$ and $\nu_k$ degrees of freedom for the $k$-th GU, is denoted by $T_{\nu_k,\mathbf{\Sigma}_k}(\cdot)$. Additionally, the dependence parameter of the Student's $t$-copula, represented by $\Theta_k \in \left[-1,1\right]$, quantifies the correlation between the $n_k$-th and $\tilde{n}_k$-th ports within $\mathbf{\Sigma}_k$, i.e., $\theta_k\approx \varrho_{n_k, \tilde{n}_k}^k$ \cite{Rostami2024rsma}. Also, $\left|\mathbf{\Sigma}_k\right|$ denotes the determinant of the correlation matrix $\mathbf{\Sigma}_k$ and
\begin{align}
\mathbf{t}^{-1}_{\nu_k} = \left[t^{-1}_{\nu_k}\left(\frac{\Upsilon\left(m_k,\frac{m_k}{\Omega_k}g\right)}{\Gamma\left(m_k\right)}\right),\dots,t^{-1}_{\nu_k}\left(\frac{\Upsilon\left(m_k,\frac{m_k}{\Omega_k}g\right)}{\Gamma\left(m_k\right)}\right)\right].
\end{align}

\subsection{Signal Model}
In the first phase, the BS implements RSMA by splitting each GU's message $w_{k}$ into a common component $w_c$ and a private component $w_{p,k}$. The common message $w_c$ is intended for each GU and encoded into a common stream $s_c$ using a shared codebook available to all GUs. This stream is designed to be decoded by every GU before extracting their specific information. Meanwhile, the private messages $w_{p,k}$ are separately encoded into distinct private streams $s_{p,k}$, ensuring that each GU receives its designated data. Hence, the transmitted signal by the BS is expressed as
\begin{align}	x=\underset{\text{common message}}{\underbrace{\sqrt{\alpha_{c}}s_{c}}}+\underset{\text{private message}}{\underbrace{\sum_{k=1}^K\sqrt{\alpha_{{p},k}}s_{{p},k}}},
\end{align}
in which $\alpha_c$ and $\alpha_{p,k}$ are the power allocation factors of $s_c$ and $s_{p,k}$, respectively, that must satisfy  $\alpha_{c}+\sum_{k=1}^K\alpha_{{p},k}=1$. Accordingly, the signal received by the UAV is given by
\begin{align}
y_a = \sqrt{L_{a,b}P_b}h_{a,b}x+z_a, 
\end{align}
where $P_b$ is the power transmitted by the BS and $z_a$ denotes the complex additive white Gaussian noise (AWGN) at the UAV with zero mean and variance of $\sigma^2_a$.

At the UAV, the common stream $s_c$ is  decoded by treating all the private GU signals as noise. Therefore, the instantaneous SINR at the UAV for decoding the common message is 
\begin{align}\label{eq-ac}
\gamma_{a}^{c,k} =\frac{P_b\alpha_cL_{a,b}g_{a,b}}{P_bL_{a,b}g_{a,b}\sum_{k=1}^{K}\alpha_{p,k}+\sigma^2_a}. 
\end{align}
Once $s_c$ is successfully decoded, SIC is applied, and the decoded $s_c$ is subtracted from the received signal. After this, the private signal $ s_{p,k}$ for GU $k$ is decoded, with the private signals from all other GUs treated as noise. Consequently, the SINR for receiving $s_{p,k}$ at the UAV can be found as
\begin{align}\label{eq-ap}
\gamma_{a}^{p,k} =\frac{P_b\alpha_{p,k}L_{a,b}g_{a,b}}{P_bL_{a,b}g_{a,b}\sum_{\tilde{k}=1,\tilde{k}\neq k}^{K}\alpha_{p,\tilde{k}}+\sigma^2_a}.
\end{align}

In the second phase, the GU signals are  successfully decoded by the UAV and forwarded to the GUs. Therefore, the received signal at the $n_{k}$-th port of $k$-th GU is defined as
\begin{align}
y_{k,n_{k}} = \sqrt{L_{a,k}P_a}h_{a,k}^{n_k}x+z_{k}, 
\end{align}
where $P_a$ represents the power transmitted by the UAV and $z_{k}$ denotes the AWGN with zero mean and variance of $\sigma^2_{k}$. 

Therefore, the received SINR used for decoding the common message at the $n_k$-th port of GU $k$ is given by
\begin{align}\label{eq-kc}
\gamma_{k}^{c,k} =\frac{P_a\alpha_cL_{a,k}g_{\mathrm{fas},k}}{P_bL_{a,k}g_{\mathrm{fas},k}\sum_{k=1}^{K}\alpha_{p,k}+\sigma^2_{k}}. 
\end{align}
After performing SIC and removing $s_c$ from the received signal, the SINR for the $k$-th GU to decode $s_{p,k}$ is given as
\begin{align}\label{eq-kp}
\gamma_{k}^{p,k} =\frac{P_a\alpha_{p,k}L_{a,k}g_{\mathrm{fas},k}}{P_aL_{a,k}g_{\mathrm{fas},k}\sum_{\tilde{k}=1,\tilde{k}\neq k}^{K}\alpha_{p,\tilde{k}}+\sigma^2_{k}}.
\end{align}

\section{OP Analysis}
In this letter, we define OP as the probability that the instantaneous SINR $\gamma$ falls below a specified threshold $\gamma_\mathrm{th}$, i.e., $P_\mathrm{o}=\Pr\left(\gamma\leq \gamma_\mathrm{th}\right)$. For the considered system model, the communication from the BS to the GUs is successful if and only if the transmission links from the BS to the UAV and that from the UAV to the GUs succeed. In other words, the UAV and GUs successfully decodes both the common and private messages. 
Thus, the OP of the $k$-th GU can be defined as
\begin{multline}\label{eq-op-def}
P_{\mathrm{o},k} =1-\\
\Pr\left(\gamma_{a}^{c,k}>\gamma_{\mathrm{th}}^{c,k},\gamma_{a}^{p,k}>\gamma_{\mathrm{th}}^{p,k},
\gamma_{k}^{c,k}>\gamma_{\mathrm{th}}^{c,k},\gamma_{k}^{p,k}>\gamma_{\mathrm{th}}^{p,k}\right),
\end{multline}
where $\gamma_{\mathrm{th}}^{c,k}$ and $\gamma_{\mathrm{th}}^{p,k}$ are the SINR threshold for the common message and private message, respectively. 

\begin{Proposition}
The OP of the $k$-th GU for the considered UAV-relay aided FAS using RSMA is given by
\begin{multline}
P_{\mathrm{o},k} = 1 - \left[1-\frac{\Upsilon\left(m_b,\frac{m_b}{\Omega_b}\hat{\zeta}\right)}{\Gamma\left(m_b\right)}\right]\\
\quad\times \Bigg[1-T_{\nu_{k},\mathbf{\Sigma}_{k}} \Bigg( t_{\nu_{k}}^{-1} 
\left( \frac{\Upsilon\left(m_{k},\frac{m_k}{\Omega_k}\tilde{\zeta}\right)}
{\Gamma\left(m_k\right)} \right), \dots, \\  
t_{\nu_k}^{-1} \left( \frac{\Upsilon\left(m_k,\frac{m_k}{\Omega_k}\tilde{\zeta}\right)}
{\Gamma\left(m_k\right)} \right) ; \nu_k, \theta_k \Bigg)\Bigg],
\end{multline}
where $\hat{\zeta}=\max\left\{\hat{\gamma}_{\mathrm{th}}^{c,k},\hat{\gamma}_{\mathrm{th}}^{p,k}\right\}$ and $\tilde{\zeta}=\max\left\{\tilde{\gamma}_{\mathrm{th}}^{c,k},\tilde{\gamma}_{\mathrm{th}}^{p,k}\right\}$, in which we have
\begin{align}
\hat{\gamma}_{\mathrm{th}}^{c,k}
&= \frac{\gamma_{\mathrm{th}}^{c,k} \sigma^2_a}{P_bL_{a,b} \left(\alpha_c - \gamma_{\mathrm{th}}^{c,k} \sum_{k=1}^{K}\alpha_{p,k}\right)}, \label{eq-t1}\\
\hat{\gamma}_{\mathrm{th}}^{p,k}&= \frac{\gamma_{\mathrm{th}}^{p,k} \sigma^2_a}{P_bL_{a,b} \left(\alpha_{p,k} - \gamma_{\mathrm{th}}^{p,k} \sum_{\tilde{k}=1\atop \tilde{k}\neq k}^{K}\alpha_{p,\tilde{k}}\right)},\\
\tilde{\gamma}_{\mathrm{th}}^{c,k}&= \frac{\gamma_{\mathrm{th}}^{c,k} \sigma^2_k}{P_aL_{a,k} \left(\alpha_c - \gamma_{\mathrm{th}}^{c,k} \sum_{k=1}^{K}\alpha_{p,k}\right)},\\
\tilde{\gamma}_\mathrm{th}^{p,k}&=\frac{\gamma_{\mathrm{th}}^{p,k} \sigma^2_a}{P_bL_{a,k} \left(\alpha_{p,k} - \gamma_{\mathrm{th}}^{p,k} \sum_{\tilde{k}=1\atop \tilde{k}\neq k}^{K}\alpha_{p,\tilde{k}}\right)}.\label{eq-t4}
\end{align} 
\end{Proposition}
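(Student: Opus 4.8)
The plan is to exploit the structural separation of the four SINR events appearing in the success probability of \eqref{eq-op-def}. First I would observe that the two UAV-side events, $\gamma_{a}^{c,k}>\gamma_{\mathrm{th}}^{c,k}$ and $\gamma_{a}^{p,k}>\gamma_{\mathrm{th}}^{p,k}$, depend only on the BS--UAV channel gain $g_{a,b}$ via \eqref{eq-ac}--\eqref{eq-ap}, whereas the two GU-side events depend only on the UAV--GU FAS channel gain $g_{\mathrm{fas},k}$ via \eqref{eq-kc}--\eqref{eq-kp}. Since the DF relay operates in HD mode over two orthogonal transmission phases with independently fading hops, $g_{a,b}$ and $g_{\mathrm{fas},k}$ are statistically independent, so the joint success probability factorizes into a product of a UAV-hop term and a GU-hop term.

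Next I would convert each SINR inequality into a threshold on the relevant channel gain, using the fact that every SINR in \eqref{eq-ac}--\eqref{eq-kp} is of the monotonic form $Ag/(Bg+C)$ with $A,B,C>0$ and is therefore strictly increasing in its channel gain. Taking $\gamma_{a}^{c,k}>\gamma_{\mathrm{th}}^{c,k}$, cross-multiplying and collecting the $g_{a,b}$ terms yields $P_b L_{a,b} g_{a,b}\left(\alpha_c-\gamma_{\mathrm{th}}^{c,k}\sum_{k}\alpha_{p,k}\right)>\gamma_{\mathrm{th}}^{c,k}\sigma_a^2$. Under the RSMA feasibility condition $\alpha_c>\gamma_{\mathrm{th}}^{c,k}\sum_{k}\alpha_{p,k}$, dividing gives the equivalent event $g_{a,b}>\hat{\gamma}_{\mathrm{th}}^{c,k}$ with $\hat{\gamma}_{\mathrm{th}}^{c,k}$ as in \eqref{eq-t1}; the same manipulation applied to \eqref{eq-ap} produces $g_{a,b}>\hat{\gamma}_{\mathrm{th}}^{p,k}$. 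Both hold simultaneously precisely when $g_{a,b}>\max\{\hat{\gamma}_{\mathrm{th}}^{c,k},\hat{\gamma}_{\mathrm{th}}^{p,k}\}=\hat{\zeta}$. An identical argument on \eqref{eq-kc}--\eqref{eq-kp} collapses the two GU-side events into $g_{\mathrm{fas},k}>\max\{\tilde{\gamma}_{\mathrm{th}}^{c,k},\tilde{\gamma}_{\mathrm{th}}^{p,k}\}=\tilde{\zeta}$.

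Finally I would evaluate the two complementary CDFs. The UAV-hop factor is $\Pr(g_{a,b}>\hat{\zeta})=1-F_{g_{a,b}}(\hat{\zeta})$, which by the Gamma CDF \eqref{eq-cdf1} equals $1-\Upsilon\!\left(m_b,\frac{m_b}{\Omega_b}\hat{\zeta}\right)/\Gamma(m_b)$, the first bracketed term. The GU-hop factor is $\Pr(g_{\mathrm{fas},k}>\tilde{\zeta})=1-F_{g_{\mathrm{fas},k}}(\tilde{\zeta})$, and substituting the multivariate $t$-distribution CDF \eqref{eq-cdf2} for the correlated port-selection gain yields the second bracketed term. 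Multiplying the two factors and subtracting from unity recovers the claimed expression.

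The main obstacle is not analytical difficulty but correctly justifying the two structural reductions: (i) the independence-based factorization, which rests on the HD orthogonal-phase assumption so that the BS--UAV and UAV--GU channels are genuinely separate random variables; and (ii) the SINR inversion, where one must verify that the denominator coefficients $\alpha_c-\gamma_{\mathrm{th}}^{c,k}\sum_k\alpha_{p,k}$ and $\alpha_{p,k}-\gamma_{\mathrm{th}}^{p,k}\sum_{\tilde{k}\neq k}\alpha_{p,\tilde{k}}$ are strictly positive. Otherwise the corresponding SINR is capped by its interference-limited ceiling below the threshold, the event becomes infeasible for every channel realization, and $P_{\mathrm{o},k}=1$ rather than the stated formula.
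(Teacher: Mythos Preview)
Your proposal is correct and follows exactly the same route as the paper's proof: factorize the joint success probability by independence of the two hops, invert each SINR inequality into a channel-gain threshold and take the maximum, then substitute the Gamma CDF \eqref{eq-cdf1} and the multivariate $t$-CDF \eqref{eq-cdf2}. You are simply more explicit about the SINR inversion and the positivity conditions on the denominator coefficients, which the paper treats separately in Remark~\ref{remark1}.
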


\begin{proof}
Using \eqref{eq-ac}, \eqref{eq-ap}, \eqref{eq-kc}, and \eqref{eq-kp}, the OP in \eqref{eq-op-def} can be written as
\begin{align}\nonumber
P_{\mathrm{o},k} &= 1- \Pr\left(g_{a,b}>\max\left\{\hat{\gamma}_{\mathrm{th}}^{c,k},\hat{\gamma}_{\mathrm{th}}^{p,k}\right\}\right)\\
&\quad\times
\Pr\left(g_{\mathrm{fas},k}>\max\left\{\tilde{\gamma}_{\mathrm{th}}^{c,k},\tilde{\gamma}_{\mathrm{th}}^{p,k}\right\}\right)\\
&=1-\left[1-F_{g_{a,b}}\left(\hat{\zeta}\right)\right]\left[1-F_{{\mathrm{fas},k}}\left(\tilde{\zeta}\right)\right], \label{eq-proof1}
\end{align}
in which $F_{g_{a,b}}(\hat{\zeta})$ and $F_{{\mathrm{fas},k}}(\tilde{\zeta})$ are given by \eqref{eq-cdf1} and \eqref{eq-cdf2}, respectively. This has completed the proof.
\end{proof}

\begin{remark}\label{remark1} The power allocation factors in RSMA, namely $\alpha_c$ and $\alpha_{p,u}$, are constrained by the outage thresholds $\gamma_\mathrm{th}^{c,k}$ and $\gamma_\mathrm{th}^{p,k}$. Specifically, these thresholds must satisfy 
\begin{align} 
\gamma_{\mathrm{th}}^{c,k} &< \frac{\alpha_{c}}{\sum_{k=1}^{K}\alpha_{p,k}}, \label{eq-con3} \\
\gamma_{\mathrm{th}}^{p,k} &< \frac{\alpha_{p,k}}{\sum_{\tilde{k}=1,\tilde{k}\neq k}^{K}\alpha_{p,\tilde{k}}}. \label{eq-con4} 
\end{align} 
These bounds arise from the assumption that the values of $\hat{\gamma}_{\mathrm{th}}^{c,k}>0$,  $\hat{\gamma}_{\mathrm{th}}^{p,k}>0$, $\tilde{\gamma}_{\mathrm{th}}^{c,k}>0$, and  $\tilde{\gamma}_{\mathrm{th}}^{p,k}>0$ hold true, as shown in \eqref{eq-t1}--\eqref{eq-t4}. If the above conditions are violated, the system operates in an infeasible region, as discussed further in Section \ref{sec-num} and illustrated in Fig.~\ref{fig:1}. 
\end{remark}

\begin{remark}\label{remark2}
In the high SNR regime, the marginal CDF in \eqref{eq-cdf1} can be written as 
\begin{align}\label{eq-cdf-inf}
F_{g_{a,i}}^\infty\left(g\right) \simeq \frac{1}{m_i\Gamma\left(m_i\right)}\left(\frac{m_i}{\Omega_i}g\right)^{m_i},
\end{align}
where, by applying \eqref{eq-cdf-inf} into \eqref{eq-proof1}, the asymptotic OP is given by  
\begin{multline}
\hspace{-4mm} P_{\mathrm{o},k} = 1 - \left[1-\frac{\left(\frac{m_b}{\Omega_b}\hat{\zeta}\right)^{m_b}}{m_b\Gamma\left(m_b\right)}\right]
\Bigg[1-T_{\nu_{k},\mathbf{\Sigma}_{k}} \Bigg( t_{\nu_{k}}^{-1} 
\left(\frac{\left(\frac{m_k}{\Omega_k}\tilde{\zeta}\right)^{m_k}}{m_k\Gamma\left(m_k\right)} \right),\\
\dots,  
t_{\nu_k}^{-1} 	\left(\frac{\left(\frac{m_k}{\Omega_k}\tilde{\zeta}\right)^{m_k}}{m_k\Gamma\left(m_k\right)} \right) ; \nu_k, \theta_k \Bigg)\Bigg].
\end{multline}
\end{remark}


\section{Numerical Results}\label{sec-num}
Here, the OP performance is evaluated, which is validated through Monte Carlo simulations. Unless otherwise specified, we use the following parameter settings. For simplicity in the simulations, we consider a two-user scenario, i.e., $K=2$, where the UAV, the first GU, and the second GU are located at $\mathbf{r}_a=\left(10,10,100\right)$ m, $\mathbf{r}_1=\left(200,200,0\right)$ m, $\mathbf{r}_2=\left(180,180,0\right)$  m, respectively. Additionally, we have set $\alpha_c=0.6$, while the power allocation factors for the private messages are given by $\alpha_{p,1}=0.75\left(1-\alpha_c\right)$ and $\alpha_{p,1}=0.25\left(1-\alpha_c\right)$, respectively. Besides, $N_k=4$, $W_k=1\lambda^2$, $\Omega_i=1$, $m_b=4$, $m_k=2$, $\sigma^2_i=-70$ dBm, $P_b=P_a=5$ dBm, $\nu_k=25$,  $\mu_1=5.0188$, $\mu_2=0.3511$, and $\beta=2$. Moreover, at the reference distance of $1$ m and the carrier frequency $f_c=3.5$ GHz, we have $\eta_1=\eta_2=4.65\times 10^{-5}$.

Fig.~\ref{fig_nw} shows the performance of OP against the transmit power, where $P_a = P_b = P$, for different values of $N_k$ and $W_k$. We can observe that the OP decreases as $P$ increases, indicating a trend of improved reliability at higher power levels. Increasing the number of fluid antenna ports $N_k$ also enhances spatial diversity, while a larger fluid antenna size $W_k$ increases spatial separation, leading to a lower OP compared to deploying FPA at GUs. Additionally, it is evident that the asymptotic results closely align with the OP curves in the high SNR regime, confirming the accuracy of our analysis. 

Fig.~\ref{fig_m} studies the OP behavior in terms of $P$ for different values of the fading parameter $m_k$ when $m_b$ is fixed. As $m_k$ increases, the OP performance improves significantly due to reduced fading severity, which results in a more reliable communication link. Additionally, it can be observed that while more power is allocated to the first GU based on the RSMA scheme, the second GU consistently outperforms the first. This occurs because the second GU is located closer to the UAV and therefore benefits from better channel conditions, such as lower path-loss and reduced interference.

\begin{figure*}[h]
\centering
\begin{subfigure}{0.38\linewidth}
\centering
\includegraphics[width=\linewidth]{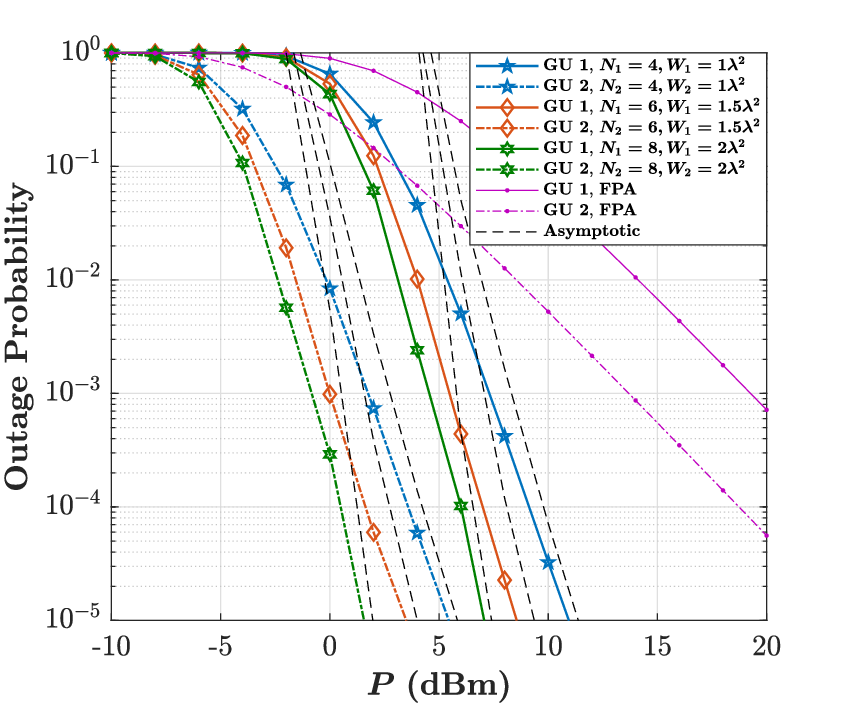}
\caption{}\label{fig_nw}
\end{subfigure}
\begin{subfigure}{0.38\linewidth}
\centering
\includegraphics[width=\linewidth]{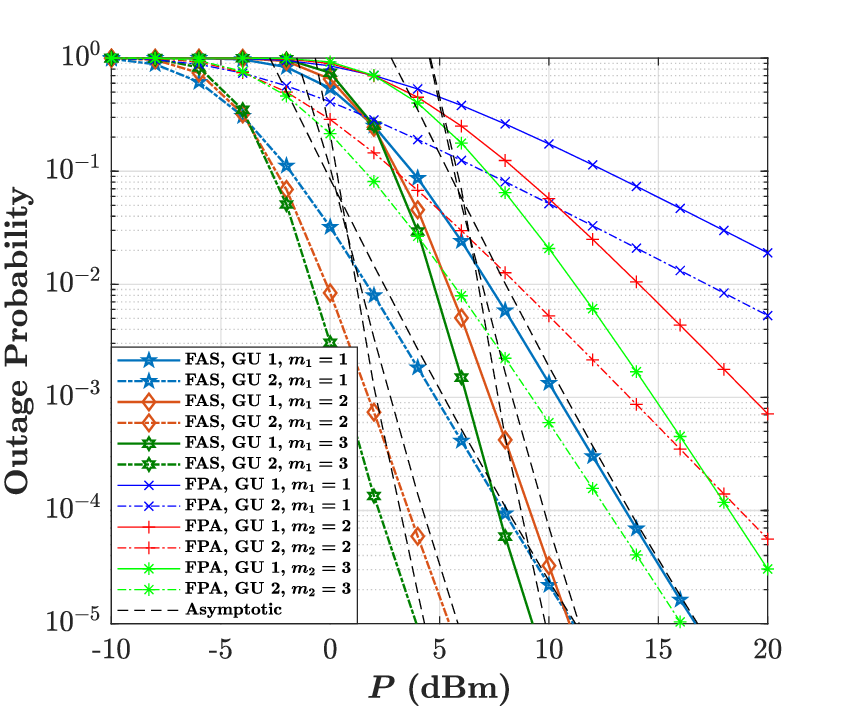}
\caption{}\label{fig_m}
\end{subfigure}
\vspace{0cm}
\begin{subfigure}{0.38\linewidth}
\centering
\includegraphics[width=\linewidth]{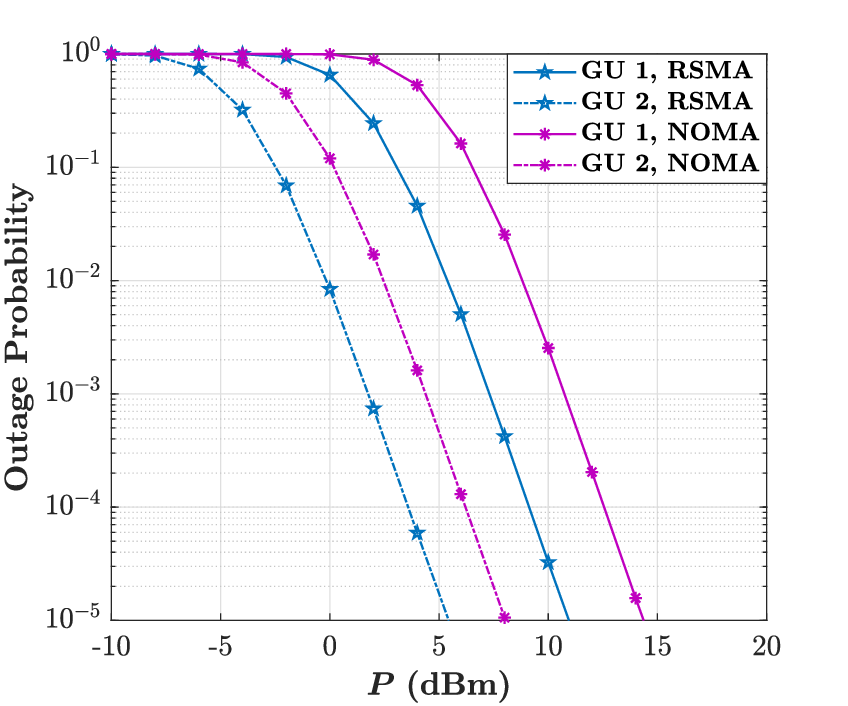}
\caption{}\label{fig_noma}
\end{subfigure}
\begin{subfigure}{0.38\linewidth}
\centering
\includegraphics[width=\linewidth]{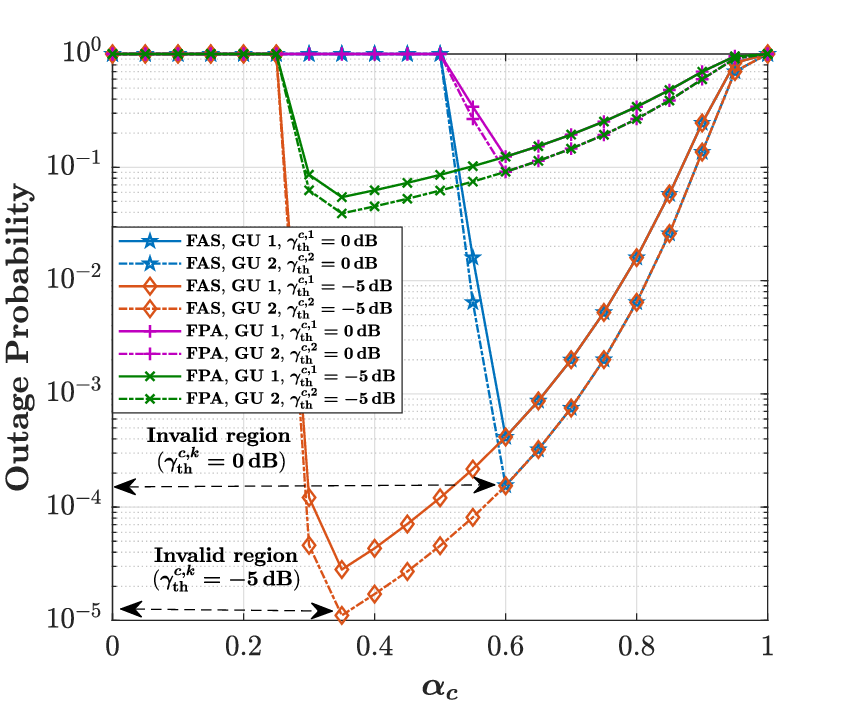}
\caption{}\label{fig_alpha}
\end{subfigure}
\caption{(a) OP versus transmit power $P$ for different values of $N_k$ and $W_k$. (b) OP versus transmit power $P$ for different values of $m_k$. (c) OP versus transmit power $P$ for different multiple access schemes. (d) OP versus RSMA common message power allocation factor $\alpha_c$ for differnt values of $\gamma_{\mathrm{th}}^{c,k}$.}\label{fig:1}
\vspace{-2mm}
\end{figure*}

Considering NOMA as a benchmark, Fig.~\ref{fig_noma} indicates the impact of RSMA on the considered UAV-relay aided FAS. We see that RSMA significantly improves the OP for both GU 1 and GU 2 compared to NOMA. The OP for both GUs under RSMA is lower across the range of transmit power levels, showing the superior performance of RSMA in reducing the likelihood of communication failure. This superiority is due to RSMA's ability to manage interference more effectively. This allows RSMA to better adapt to dynamic channel conditions and optimize resource allocation, particularly in FAS where antenna configurations can change rapidly. 

Fig.~\ref{fig_alpha} illustrates the OP results versus the RSMA common message power allocation factor $\alpha_c$ for various SNR threshold values $\gamma_{\mathrm{th}}^{c,k}$. We see that when $\alpha_c$ and $\gamma_{\mathrm{th}}^{c,k}$ fail to meet a certain condition (see Remark \ref{remark1}), the system enters an invalid region where communication becomes infeasible. Also, the OP decreases with increasing $\alpha_c$ initially, reaching a minimum at an optimal value before increasing again, creating a convex-like trend. This behavior stems from the trade-off in RSMA between allocating power to the common and private messages. In fact, when $\alpha_c$ is too large, insufficient power is left for the private message, causing its SNR to drop below the threshold and resulting in an outage. By contrast, when $\alpha_c$ is too small, the common message SNR becomes inadequate, leading to a common message outage. It is also observed that the position of the optimal $\alpha_c$ shifts depending on $\gamma_{\mathrm{th}}^{c,k}$, with higher threshold values increasing the OP and widening the invalid region as the upper limit shifts to the right.

\vspace{-2mm}
\section{Conclusion}\label{sec-con}
This letter studied how the incorporation of FAS technology can enhance the performance of UAV-aided multiuser communication networks. In particular, we focused on a scenario in which a FPA BS served multiple FAS-equipped GUs with the help of a UAV acting as an aerial relay. By employing RSMA signaling technique at the BS and a HD DF strategy for the UAV, we derived an analytical expression for the OP and analyzed its behavior in the high-SNR regime. The findings highlighted the advantages of using FAS-equipped GUs over traditional FPA setups, showing that this approach leads to a more efficient system, particularly in high-SNR conditions. Our results suggest that leveraging FAS, in conjunction with UAV support, has the potential to greatly improve the reliability and efficiency of future communication networks.

\vspace{-2mm}

\end{document}